\documentclass[referee,pdflatex,sn-basic,Numbered]{sn-jnl}

\usepackage{amsmath,amssymb,mathtools}
\usepackage{booktabs,array}
\usepackage{xcolor}
\usepackage{microtype}
\hypersetup{
  hidelinks,
  pdftitle={Key Recovery from Residue-Confined Errors in the Pradhan CRT-RLWE Construction},
  pdfauthor={Lukasz Olejnik and Bartosz Naskrecki},
  pdfsubject={Error collapse and error-distribution mismatch in the Pradhan CRT-RLWE construction},
  pdfkeywords={CRT-RLWE, Ring-LWE, homomorphic encryption, error distributions, CRT encoding}
}

\theoremstyle{thmstyleone}
\newtheorem{theorem}{Theorem}[section]
\newtheorem{proposition}[theorem]{Proposition}
\newtheorem{lemma}[theorem]{Lemma}
\newtheorem{corollary}[theorem]{Corollary}
\theoremstyle{thmstyletwo}
\newtheorem{remark}[theorem]{Remark}
\newtheorem{example}[theorem]{Example}
\theoremstyle{thmstylethree}
\newtheorem{definition}[theorem]{Definition}

\newcommand{\editionlabel}{}
\newcommand{\claimid}[1]{}
\newcommand{\provenance}[1]{}
\newcommand{\internalnote}[1]{}

\newcommand{\Z}{\mathbb Z}
\newcommand{\F}{\mathbb F}
\newcommand{\R}{\mathcal R}
\newcommand{\Rq}{\R_q}
\newcommand{\units}[1]{#1^{\times}}

\newcommand{\Adv}{\operatorname{Adv}}
\newcommand{\CRT}{\operatorname{CRT}}
\newcommand{\Enc}{\mathsf{Enc}}

\newcommand{\KeyGen}{\mathsf{KeyGen}}
\newcommand{\TV}{\operatorname{TV}}

\begin{document}

\title[Key Recovery from Residue-Confined Errors in Pradhan CRT-RLWE]
{Key Recovery from Residue-Confined Errors in the Pradhan CRT-RLWE
Construction\editionlabel}

\author*[1,2]{\fnm{Lukasz} \sur{Olejnik}}\email{me@lukaszolejnik.com}
\author[3,4]{\fnm{Bartosz} \sur{Naskrecki}}\email{bartnas@amu.edu.pl}

\affil*[1]{\orgdiv{Department of War Studies},
\orgname{King's College London},
\orgaddress{\city{London}, \country{United Kingdom}}}
\affil[2]{\orgname{Independent Researcher}}
\affil[3]{\orgdiv{Faculty of Mathematics and Computer Science},
\orgname{Adam Mickiewicz University},
\orgaddress{\city{Poznan}, \country{Poland}}}
\affil[4]{\orgdiv{Centre for Credible AI},
\orgname{Warsaw University of Technology},
\orgaddress{\city{Warsaw}, \country{Poland}}}

\abstract{We show that the CRT-FHE scheme of Pradhan et al.\ is insecure for laws
within its assumed error distribution range.  The secret key follows from the public key by a single ring inversion whenever the public multiplier is a unit.  The
plaintext is recovered from any ciphertext under such a law without the secret
key, for every multiplier, giving chosen-plaintext advantage $1/2$.  We further show that the
transformation from ordinary Ring-LWE to CRT-RLWE does not preserve the error
distribution, so it does not establish that CRT-RLWE is at least as hard as
Ring-LWE.

One mechanism underlies both.  The Chinese remainder theorem (CRT) function is
reduced modulo $p_1p_2$ while its output is used modulo a coprime modulus $q$,
so under every zero-preserving section an error in $p_2\R$ encodes to zero.  The law $p_2B_1$ is so confined, meets the
stated conditions, and decrypts correctly.  Confinement is not a
weakness of scale: scaling any baseline law by $p_2$ leaves its ordinary
Ring-LWE problem exactly equivalent, while the reduced encoder destroys every
error it produces.  The reduction discrepancy is a multiple of
$p_1p_2$ and not of $q$, so the small-error premise of the proof cannot remove
it, and at the reported parameters a single error coefficient refutes the
identity while satisfying that premise.  The centered binomial $B_2$ separates the coefficient
laws at total variation distance $3/8$, and at the reported dimension that distance
between the induced polynomial laws is exponentially close to one.  }

\keywords{Chinese remainder theorem, ring learning with errors,
homomorphic encryption, error distributions, CRT encoding, cryptanalysis}

\maketitle

\section{Introduction}

Ring learning with errors (RLWE) is a family of problems whose security
depends on the ring, modulus, representation, and error distribution
\cite{LPR2010,Peikert2016}.  Hardness theorems apply to specified
distributions. They do not imply that every narrow or algebraically
structured error law is secure.  A transformation of the error must therefore
be accounted for in the definition of the error distribution and in the
security reduction that accompanies it.

Pradhan, Dutta, Jangir, and Das propose a homomorphic-encryption construction
based on a Chinese remainder theorem (CRT) variant of the RLWE distribution
\cite{PradhanEtAl2026}.  Definition~1 gives the CRT function by a formula
reduced modulo $p_1p_2$, $\KeyGen$ and $\Enc$ then insert its output into a
ring modulo a coprime ciphertext modulus $q$.  This cross-modulus use requires
a representative, because equality modulo $p_1p_2$ need not be equality modulo
$q$.  The representative choice is immaterial for the zero CRT class: every
zero-preserving convention sends it to $0$, and that single fact drives the
results below.  We use the definition and theorem numbering of
\cite{PradhanEtAl2026}.

Our contributions are the following.

\begin{enumerate}
\item Residue confinement collapses the reduced CRT error.  If the public-key
error lies in $p_2\R$, then $b=as$ and the secret is recovered exactly on the
unit event (Theorem~\ref{thm:collapse}).  If the challenge errors are confined
as well, a polynomial gcd and CRT decomposition recovers the encoded challenge
message for \emph{every} public multiplier, with chosen-plaintext advantage
$1/2$ (Theorem~\ref{thm:distinguish}).
\item A concrete efficiently samplable subgaussian law realizes those
hypotheses.  The coefficient law $p_2B_1$, built from the centered
Bernoulli difference $B_1$, is discrete, bounded by $p_2$, supported on
$p_2\R$, and subgaussian with variance proxy $p_2^2/2$
(Proposition~\ref{prop:law}).
\item The same-law RLWE transformation does not preserve the error
distribution.  For the least-nonnegative representative, the discrepancy
between the two maps has the closed form $p\lfloor ke/p_2\rfloor$, so under
the proof's no-wrap premise the identity used in Theorem~4 holds on
$0\le ke<p_2$ (Lemma~\ref{lem:discrepancy}), and at the reported parameters that
is $e\in\{0,1\}$.  Distributionally, scaling induces
$(e\mapsto\Delta_2e)_\#\chi$ while the reduced encoder induces
$\psi=(e\mapsto E_{\sigma,q}(0,e))_\#\chi$, and these differ whenever the
coefficient law charges more integers, pairwise distinct modulo $q$, than the
small CRT factor $p_2$ (Proposition~\ref{prop:mismatch} and
Corollary~\ref{cor:mismatch}).
\end{enumerate}

The results of Section~\ref{sec:attack} concern
residue-confined errors, whereas the distributional results of
Section~\ref{sec:mismatch} apply more generally, including to
centered-binomial and untruncated discrete-Gaussian coefficient laws.
The relation to the conditional security theorem of
\cite{PradhanEtAl2026} is discussed in Section~\ref{sec:conditional}.

Evaluation and subfield attacks exploit RLWE or PLWE instances whose errors
become distinguishable under algebraic projections
\cite{EisentraegerHallgrenLauter2014,EliasLauterOzmanStange2015,
ChenLauterStange2015}.  In that line of work the error law is fixed by the
problem statement, and the attack applies a projection under which an already
narrow law becomes statistically visible.  The mechanism here is different and
prior to any projection: the construction interposes an additional encoder
between the sampled error and the ring element that carries it, and that
encoder is not injective on the sampled error.  For the residue-confined laws
of Section~\ref{sec:attack} the resulting failure is exact, and in both sections it is a property of the encoding step, not
 of the ambient number field.  Peikert studies the sensitivity of Ring-LWE
security to the choice and representation of the error distribution, and
separates the narrow laws that admit attacks from those supported by
worst-case hardness results \cite{Peikert2016}.  Theorem~\ref{thm:scaling} establishes an exact equivalence between the
corresponding ordinary-RLWE distributions.  Castryck,
Iliashenko, and Vercauteren analyze how transformations between error
distributions in ring-based LWE affect the resulting problem
\cite{CastryckIliashenkoVercauteren2016}.  The push-forward computed in
Section~\ref{sec:mismatch} is such a transformation, imposed here by the
construction itself.

\section{CRT encoding and representative lifts}

Let
\[
  \R=\Z[X]/(X^n+1),\qquad \R_t=\R/t\R,
\]
where $n$ is a power of two.  Let $p_1,p_2$ be coprime, set $p=p_1p_2$, and
assume $p<q/2$ and $\gcd(p,q)=1$, as required by the parameter conditions of
\cite{PradhanEtAl2026}.  Denote the class-valued CRT isomorphism by
\[
  \overline{\CRT}:\R_{p_1}\times\R_{p_2}\longrightarrow\R_p.
\]
All lifts below act coefficientwise in the power basis.

Choose integers
\[
 \Delta_1=p_2[p_2^{-1}]_{p_1},\qquad
 \Delta_2=p_1[p_1^{-1}]_{p_2},
\]
where the inverses are represented in the indicated least-nonnegative
ranges.  Thus $\Delta_1$ has residues $(1,0)$ and $\Delta_2$ has residues
$(0,1)$ modulo $(p_1,p_2)$.

Throughout, the error distribution $\chi$ samples integer-polynomial
representatives.  This is the type required to make the encoder well defined:
the CRT function consumes $e\bmod p_2$, while the ciphertext lives modulo the
coprime modulus $q$, so reduction modulo $q$ occurs after the encoding step
and not before it.  Where a coefficient law is needed we write $\nu$ for a
distribution on $\Z$ and take $\chi$ to sample the $n$ coefficients
independently from $\nu$, as the centered binomial and discrete Gaussian
instantiations do.

\begin{definition}[Reduced CRT encoder]
\label{def:reduced}
A coefficientwise section is a public map
$\sigma:\R_p\rightarrow\R$ whose reduction modulo $p$ is the identity.
It is zero-preserving if $\sigma(0)=0$.  The associated encoder is
\[
 E_{\sigma,q}(m,e)=
 \sigma\!\left(\overline{\CRT}(m,e\bmod p_2)\right)\bmod q.
\]
The least-nonnegative and centered coefficient representatives are
zero-preserving sections.
\end{definition}

\begin{definition}[Unreduced integer lift]
\label{def:raw}
For a fixed integer representative $\widetilde m$ of the message, define
\[
 E_{\rm raw,q}(m,e)=\Delta_1\widetilde m+\Delta_2e\pmod q.
\]
Here the sampled integer representative $e$ is used before reduction modulo
$p_2$.
\end{definition}

\begin{lemma}[Carry relation between the two lifts]
For every $m$ and $e$, there is an integer polynomial
$K_\sigma(m,e)\in\R$ such that
\[
\Delta_1\widetilde m+\Delta_2e
=
\sigma\!\left(\overline{\CRT}(m,e \bmod p_2)\right)
+pK_\sigma(m,e).
\]
Consequently, the two encoders differ in $R_q$ by
$pK_\sigma(m,e)$, which need not vanish because $p$ is a unit modulo $q$.
\end{lemma}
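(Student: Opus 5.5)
The plan is to show that both sides of the identity are integer polynomials lying in the same residue class of $\R_p$. Their difference then lies in $p\R$, and we define $K_\sigma(m,e)$ to be that difference divided by $p$. Since everything acts coefficientwise in the power basis, it is enough to work with a single coefficient, and the polynomial statement follows by applying the argument to each coefficient.

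First, we reduce the left side modulo $p_1$ and modulo $p_2$. The residues of $\Delta_1$ are $(1,0)$ and those of $\Delta_2$ are $(0,1)$. Hence
\[
\Delta_1\widetilde m+\Delta_2e \equiv \widetilde m \pmod{p_1},\qquad
\Delta_1\widetilde m+\Delta_2e \equiv e \pmod{p_2}.
\]
Here $\widetilde m$ is a representative of $m$ in $\R_{p_1}$; the message component of the CRT pair lives in $\R_{p_1}$. The class of the left side in $\R_p$ is therefore the unique class whose images in $\R_{p_1}$ and $\R_{p_2}$ are $m$ and $e\bmod p_2$. By the CRT isomorphism, that class is exactly $\overline{\CRT}(m,e\bmod p_2)$.

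Second, the right-hand term $\sigma(\overline{\CRT}(m,e\bmod p_2))$ reduces modulo $p$ to $\overline{\CRT}(m,e\bmod p_2)$, because $\sigma$ is a section (Definition~\ref{def:reduced}). The two integer polynomials therefore agree in $\R_p=\R/p\R$. Their difference is then $pK$ for some $K\in\R$. This holds because $p\R$ consists precisely of polynomials whose coefficients are all divisible by $p$, so the division is exact coefficientwise. We set $K_\sigma(m,e)=K$.

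For the consequence, we reduce the identity modulo $q$. The left side reduces to $E_{\rm raw,q}(m,e)$ by Definition~\ref{def:raw}, and the first term on the right reduces to $E_{\sigma,q}(m,e)$. So the two encoders differ by $pK_\sigma(m,e)\bmod q$. Because $\gcd(p,q)=1$, multiplication by $p$ is injective on $\R_q$. Hence the discrepancy vanishes in $\R_q$ if and only if $K_\sigma(m,e)\equiv0\pmod q$, which need not hold. A concrete witness is $m=0$, $\sigma$ the least-nonnegative section, and $e=p_2$. Then $\sigma(\overline{\CRT}(0,0))=0$, while the raw lift equals $\Delta_2p_2$, which is a nonzero multiple of $p$ (namely $p[p_1^{-1}]_{p_2}$, taking $p_2>1$). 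Its absolute value is less than $p\,p_2$, so it lies below $q$ once $q$ is large relative to $p_2$. It is therefore nonzero modulo $q$.

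There is no genuine obstacle here. The only point requiring care is type bookkeeping: the CRT identification must be carried out in $\R_p$, on classes, before any reduction modulo $q$. Under that convention, $e$ enters the raw lift as an integer representative, not as a class.
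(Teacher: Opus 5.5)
Your proof is correct and is essentially the paper's argument: both integer polynomials have the same residues modulo $p_1$ and $p_2$, equivalently the same class in $\R_p$, so their difference lies in $p\R$, and reducing modulo $q$ gives the consequence; your witness $m=0$, $e=p_2$ is the same separation example the paper gives right after the lemma. One small point: the witness needs no condition that $q$ be large relative to $p_2$, because $K_\sigma(0,p_2)=[p_1^{-1}]_{p_2}\in[1,p_2)$ and $p_2\le p<q$, so your own observation that multiplication by $p$ is injective on $\R_q$ already gives nonvanishing under the standing hypotheses.
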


\begin{proof}
The two integer polynomials have the same coefficient residues modulo $p_1$
and modulo $p_2$, hence their difference is coefficientwise divisible by
$p=p_1p_2$.  Reduction modulo $q$ gives the last statement.
\end{proof}

The separation is immediate for $m=0$ and $e=p_2r$.  The reduced encoder,
realized by any zero-preserving section, gives
\[
 E_{\sigma,q}(0,p_2r)=\sigma(0)=0,
\]
whereas the unreduced lift gives
\[
 E_{\rm raw,q}(0,p_2r)=\Delta_2p_2r\pmod q,
\]
which is not the zero map under the stated parameter conditions.

Multiples of the CRT modulus are retained in the integer-level correctness
analysis of \cite{PradhanEtAl2026} as well.  In its Section~4.1, where
$p=p_1^rp_2$, Equation~(8) writes the centered decryption value as
\[
 v(c)=p\,k(c)+\CRT_{p_1^r,p_2}(m(c),\eta(c))
 =p\,k(c)+\Delta_1m(c)+\Delta_2\eta(c)
\]
with a coarse integer component $k(c)$, and Equation~(9) retains
$p\,k(c)$ in the global error
$E(c)=\Delta_2\eta(c)+p\,k(c)$.  Their $k(c)$ is distinct from the representative carry $K_\sigma(m,e)$ of Lemma~2.3; in particular, Equation~(13)
specializes fresh encryption to $k(c)=0$.
Equations~(8)--(9) nevertheless show that the integer-level correctness
analysis of \cite{PradhanEtAl2026} retains integral multiples of the
applicable CRT modulus.

Lemma~2.3 therefore establishes three facts used below.  The
reduced CRT encoder and the unreduced integer lift are distinct maps into
$\Rq$.  The results of Section~\ref{sec:attack} concern the reduced encoder of
Definition~\ref{def:reduced}, which is a representative-level realization of
the reduced CRT function of Definition~1 of \cite{PradhanEtAl2026}, used in
$\KeyGen$ through the term
$\CRT_{p_1,p_2}(0,e)$.  The algebraic identity used in Theorem~4, which
identifies that term with the integer $\Delta_2e$, corresponds instead to
Definition~\ref{def:raw}.

\section{Consequences of error collapse}
\label{sec:attack}

Call a distribution on integer polynomials $p_2$-confined if its support is
contained in $p_2\R$.  Under the reduced encoder, consider the public-key and
encryption equations
\begin{align*}
 b&=as+E_{\sigma,q}(0,e_{\rm pk}),\\
 c_0&=bu+E_{\sigma,q}(m,e_0),\\
 c_1&=-au+E_{\sigma,q}(0,e_1)
\end{align*}
in $\Rq$.  Write $M(m)=E_{\sigma,q}(m,0)$ for the encoding of a message
carrying no error.

\begin{theorem}[Error collapse and key recovery]
\label{thm:collapse}
Let $\sigma$ be an efficiently computable zero-preserving section.
If $e_{\rm pk}$ is sampled from a $p_2$-confined law, then
\[
 E_{\sigma,q}(0,e_{\rm pk})=0,
 \qquad b=as .
\]
A passive algorithm recovers $s=a^{-1}b$ whenever $a\in\units{\Rq}$, in time
polynomial in $n$ and $\log q$.
\end{theorem}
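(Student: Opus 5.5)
The proof has three parts: the encoder identity, the equation $b=as$, and the cost of the inversion.

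\textbf{The encoder vanishes.} Since the law is $p_2$-confined, every sample satisfies $e_{\rm pk}=p_2r$ for some $r\in\R$. Reducing coefficientwise modulo $p_2$ therefore gives $e_{\rm pk}\bmod p_2=0$ in $\R_{p_2}$. The class-valued CRT map $\overline{\CRT}:\R_{p_1}\times\R_{p_2}\to\R_p$ is a ring isomorphism, so it sends $(0,0)$ to $0\in\R_p$. Because $\sigma$ is zero-preserving, $\sigma(0)=0$ in $\R$, and its reduction modulo $q$ is $0$. This is the computation recorded after Lemma~2.3, now applied to the sampled error:
\[
E_{\sigma,q}(0,e_{\rm pk})=\sigma\bigl(\overline{\CRT}(0,0)\bigr)\bmod q=0 .
\]
Substituting into the public-key equation $b=as+E_{\sigma,q}(0,e_{\rm pk})$ gives $b=as$ in $\Rq$. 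This holds for every sample in the support, not merely with high probability.

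\textbf{Key recovery.} Assume $a\in\units{\Rq}$. Multiplying $b=as$ by $a^{-1}$ gives $s=a^{-1}b$ uniquely. The algorithm is passive: it reads only the public key $(a,b)$, with no oracle or decryption queries.

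\textbf{Running time.} Two steps remain.
\begin{enumerate}
\item Computing $a^{-1}$ in $\Rq=\Z_q[X]/(X^n+1)$. I would run the extended Euclidean algorithm on $a$ and $X^n+1$. This is only straightforward over a field, and $q$ need not be prime, so this is the step that needs care.
\item Multiplying $a^{-1}$ by $b$ in $\Rq$, which costs $O(n^2)$ operations on $O(\log q)$-bit integers.
\end{enumerate}

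\textbf{Inversion for composite $q$.} Factoring $q$ is unavailable in general, so I would avoid it. The first approach is linear algebra: treat the multiplication-by-$a$ matrix $A\in\Z_q^{n\times n}$ in the power basis. Since $a$ is a unit, $A$ is invertible over $\Z_q$, and one solves $As=b$ for the coefficient vector of $s$. This can be done in polynomial time over $\Z/q\Z$ via Hermite or Smith normal form, or Howell form, all of which are polynomial in $n$ and $\log q$.

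An alternative is to work with integer polynomials. One computes the resultant $\operatorname{Res}(\tilde a,X^n+1)$ together with the cofactor $u$ satisfying $u\tilde a\equiv\operatorname{Res}\pmod{X^n+1}$. This resultant is the norm of $a$, hence a unit modulo $q$, since $a$ is a unit. Then $a^{-1}=u\cdot\operatorname{Res}^{-1}\bmod q$, and both the resultant and the cofactor have polynomially bounded bit size and can be computed in polynomial time.

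Either route yields time polynomial in $n$ and $\log q$. Any Euclidean step that hits a non-invertible leading coefficient would expose a factor of $q$ that could be split on, but the determinant formulation sidesteps this. I would present the resultant/adjugate argument and remark that an efficiently computable $\sigma$ is not even needed for recovery itself; it only makes the scheme well defined.
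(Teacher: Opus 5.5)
Your proof is correct and follows the paper's argument. Confinement gives $e_{\rm pk}\bmod p_2=0$ and zero preservation gives $\sigma(0)=0$, so $b=as$; then $a^{-1}$ is obtained by linear algebra over $\Z_q$ on the multiplication-by-$a$ map, with the extended Euclidean algorithm serving only as the prime-$q$ alternative. Your resultant/adjugate variant is Cramer's rule for that same matrix, since the resultant is, up to sign, $\det$ of multiplication by $\tilde a$, so it is a concrete realization of the paper's linear-algebra step rather than a different route.
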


\begin{proof}
Confinement gives $e_{\rm pk}\bmod p_2=0$, so
$\overline{\CRT}(0,e_{\rm pk}\bmod p_2)=0$, and zero preservation gives
$\sigma(0)=0$.  The public-key equation reduces to $b=as$.  For the inverse,
note that multiplication by $a$ is a $\Z_q$-linear endomorphism of $\Rq$,
and $a$ is a unit if and only if that endomorphism is invertible.  The inverse
$a^{-1}$ is
then obtained by linear algebra over $\Z_q$ in time polynomial in $n$ and
$\log q$.  When $q$ is prime the extended Euclidean algorithm in
$\F_q[X]/(X^n+1)$ is the standard alternative.
\end{proof}

The unit event is not rare.  When $q$ is prime and
$F=X^n+1=\prod_i f_i^{k_i}$ over $\F_q$ with distinct monic irreducibles $f_i$
of degrees $d_i$, a uniform $a\in\Rq$ is a unit with probability
\[
 \Pr[a\in\units{\Rq}]=\prod_i(1-q^{-d_i})
 \ \ge\ 1-\frac nq ,
\]
the bound following from the union bound over at most $n$ factors.  For a
completely split $F$ the probability is exactly $(1-q^{-1})^n$.  The next
result removes the conditioning entirely.

We use the standard chosen-plaintext experiment: the adversary submits
$m_0,m_1$, receives an encryption of $m_\beta$ for a uniform bit $\beta$, and
outputs $\beta'$, with advantage
$\Adv=\left|\Pr[\beta'=\beta]-\tfrac12\right|$.  A success probability
$\Pr[\beta'=\beta]=1$ therefore corresponds to advantage $1/2$, the maximum.

\begin{theorem}[Perfect challenge-message recovery]
\label{thm:distinguish}
Let $\sigma$ be an efficiently computable zero-preserving section, let $q$ be
prime, and let $F=X^n+1$ be squarefree over $\F_q$.  Suppose $e_{\rm pk}$,
$e_0$, and $e_1$ are sampled from $p_2$-confined laws, and that two
efficiently chosen messages $m_0,m_1$ satisfy $M(m_0)\ne M(m_1)$.  Then a
deterministic polynomial-time adversary recovers $M(m_\beta)$ for every
$a\in\Rq$, unit or not, and outputs $\beta'=\beta$ with probability one.  Its
chosen-plaintext advantage is $1/2$.
\end{theorem}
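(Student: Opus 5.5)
The plan is to eliminate every error term first and then solve the resulting noiseless linear system componentwise, using the CRT decomposition of $\Rq$ induced by the factorization of $F$. Exactly as in the proof of Theorem~\ref{thm:collapse}, confinement gives $E_{\sigma,q}(0,e_{\rm pk})=E_{\sigma,q}(0,e_1)=0$. The first step is to check that $E_{\sigma,q}(m,e_0)=E_{\sigma,q}(m,0)=M(m)$. This holds because $e_0\bmod p_2=0$, so the input to $\overline{\CRT}$ is $(m,0)$, which is precisely the input defining $M(m)$. The public data therefore satisfy, in $\Rq$,
\[
 b=as,\qquad c_0=bu+M(m_\beta),\qquad c_1=-au .
\]
Eliminating $u$ and using $b=as$ gives the key identity $c_0=-s\,c_1+M(m_\beta)$. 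The only unknown it still contains is $s$, and $s$ is pinned down wherever $a$ is invertible.

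Next I would split $\Rq$ using a polynomial gcd. Lift $a$ to $\F_q[X]$ with degree below $n$, and compute $g=\gcd(a,F)$ and $h=F/g$ by the Euclidean algorithm. Because $F$ is squarefree, $\gcd(g,h)=1$, so the CRT gives $\Rq\cong\F_q[X]/(g)\times\F_q[X]/(h)$. We have $a\equiv 0\pmod g$. Moreover $a$ is a unit modulo $h$: any common irreducible factor of $a$ and $h$ would divide $g$, and then it would divide $F$ twice, contradicting squarefreeness.

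The two components are then handled separately.
\begin{itemize}
\item Modulo $g$: from $a\equiv0$ we get $b\equiv0$ and $c_1\equiv0$, hence $M(m_\beta)\equiv c_0\pmod g$.
\item Modulo $h$: we get $s\equiv a^{-1}b$, hence $M(m_\beta)\equiv c_0+a^{-1}b\,c_1\pmod h$.
\end{itemize}
Recombining the two residues by extended Euclid, which is possible since $\gcd(g,h)=1$, yields $M(m_\beta)\in\Rq$ exactly. This holds for every $a$: the unit case is $g=1$, and the case $a=0$ is $h=1$.

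Finally, the adversary evaluates $M(m_0)$ and $M(m_1)$ itself, which it can do because $\sigma$ and $\overline{\CRT}$ are efficiently computable. It outputs the $\beta'$ for which $M(m_{\beta'})$ equals the recovered value. Since $M(m_0)\ne M(m_1)$, this gives $\beta'=\beta$ with probability one, and hence advantage $1/2$.

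All steps are polynomial in $n$ and $\log q$: gcds, modular inverses and CRT recombination in $\F_q[X]$ at degree at most $n$. The main obstacle is the non-unit case, specifically justifying that the splitting $F=gh$ is coprime and that $a$ is invertible modulo $h$. This is exactly where the squarefreeness hypothesis is used. Without it, a repeated factor would leave $a$ nilpotent but nonzero on some local component, and $M$ would not be isolated by this argument.
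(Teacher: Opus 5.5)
Your proof is correct and matches the paper's argument: you collapse the three encoder terms, split via $g=\gcd(\widetilde a,F)$ and $h=F/g$, read off $M(m_\beta)\equiv c_0$ modulo $g$, invert $a$ modulo $h$, and recombine by polynomial CRT. The only differences are cosmetic: you solve for $s\equiv a^{-1}b$ modulo $h$ where the paper solves for $u\equiv -a^{-1}c_1$, giving the same residue $c_0+a^{-1}bc_1$, and you spell out why $a$ is a unit modulo $h$, which the paper only asserts.
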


\begin{proof}
Confinement and zero preservation give
\[
 E_{\sigma,q}(0,e_{\rm pk})=E_{\sigma,q}(0,e_1)=0,
 \qquad E_{\sigma,q}(m,e_0)=M(m),
\]
so $b=as$, $c_0=bu+M(m_\beta)$, and $c_1=-au$.

Let $\widetilde a\in\F_q[X]$ be the degree-$<n$ representative of $a$, and put
\[
 d=\gcd(\widetilde a,F),\qquad h=F/d.
\]
Squarefreeness gives $\gcd(d,h)=1$.  Modulo $d$, both $a$ and $b=as$ vanish,
so $c_0\equiv M(m_\beta)$.  Modulo $h$, the class of $a$ is a unit, hence
\[
 u\equiv-a^{-1}c_1 \pmod h,
 \qquad M(m_\beta)\equiv c_0-bu \pmod h.
\]
Polynomial CRT reconstructs $M(m_\beta)$ modulo $F$ from its residues modulo
$d$ and $h$.  The degenerate cases $d=1$ and $d=F$ are included.  Comparing
the reconstructed value with $M(m_0)$ and $M(m_1)$, which differ by
hypothesis, determines $\beta$.  Polynomial gcd, extended Euclidean
computation, and polynomial CRT run in time polynomial in $n$ and $\log q$.
\end{proof}

\begin{remark}[Squarefreeness is automatic]
\label{rem:squarefree}
The hypothesis on $F=X^n+1$ is not an additional restriction in the setting of
\cite{PradhanEtAl2026}: there $n$ is a power of two and $q$ is an odd prime,
so $q\nmid n$ and $n$ is invertible in $\F_q$.  Since $F(0)=1$, the variable
$X$ does not divide $F$, and $\gcd(F,F')=\gcd(F,nX^{n-1})=1$.  Hence $F$ is
squarefree over $\F_q$ for every admissible parameter choice.
\end{remark}

\begin{corollary}[An explicit challenge pair]
\label{cor:separation}
Let $\sigma$ be the least-nonnegative or the centered section and assume
$p<q/2$.  Then $M(0)=0$ and $M(1)\ne0$ in $\Rq$, so the messages
$m_0=0$ and $m_1=1$ satisfy the separation hypothesis of
Theorem~\ref{thm:distinguish}.
\end{corollary}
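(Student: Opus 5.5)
We must show $M(0)=0$ and $M(1)\ne 0$ in $\Rq$. The plan is to compute $M(m)=E_{\sigma,q}(m,0)=\sigma(\overline{\CRT}(m,0))\bmod q$ directly for the two messages, using only that $\sigma$ is the least-nonnegative or centered coefficient representative and that $p<q/2$.

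For $m_0=0$ the CRT class is $\overline{\CRT}(0,0)=0\in\R_p$. Both named sections are zero-preserving (Definition~\ref{def:reduced}), so $\sigma(0)=0$ and hence $M(0)=0$ in $\Rq$.

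For $m_1=1$, the class $\overline{\CRT}(1,0)\in\R_p$ is the constant polynomial whose constant coefficient $c$ satisfies $c\equiv1\pmod{p_1}$ and $c\equiv0\pmod{p_2}$, with all other coefficients zero. This is the residue class of $\Delta_1$ modulo $p$. Applying $\sigma$ coefficientwise gives a constant integer polynomial $\sigma(\overline{\CRT}(1,0))=c_\sigma$. For the least-nonnegative section $c_\sigma\in[0,p)$, and for the centered section $|c_\sigma|\le p/2$. In both cases $|c_\sigma|<p<q/2$.

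It remains to show $c_\sigma\ne0$ as an integer. Since $c_\sigma\equiv1\pmod{p_1}$, it vanishes only if $p_1=1$. The standing setup presumably takes $p_1>1$, since $p_1$ is the plaintext modulus; I would state this explicitly, noting that for $p_1=1$ the message space is trivial and no pair with $M(m_0)\ne M(m_1)$ exists. Given $c_\sigma\ne0$ and $|c_\sigma|<q/2<q$, the integer $c_\sigma$ is not divisible by $q$. Hence the constant coefficient of $M(1)$ is nonzero in $\Z_q$, so $M(1)\ne0$ in $\Rq$.

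The only delicate step is this nonvanishing of $c_\sigma$ modulo $q$, which uses the bound $p<q/2$ together with $p_1>1$. Once it is established, $M(0)=0\ne M(1)$, which is exactly the separation hypothesis of Theorem~\ref{thm:distinguish}. Both messages are fixed constants, so the choice is trivially efficient.
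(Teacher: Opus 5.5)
Your proof is correct and follows the paper's argument: zero preservation gives $M(0)=0$, and for $M(1)$ the chosen representative of the class of $\Delta_1$ is a nonzero integer of absolute value less than $p<q/2$, hence nonzero modulo $q$. The paper obtains the nonvanishing from $[p_2^{-1}]_{p_1}\ne0$, which is the same fact as your $c_\sigma\equiv1\pmod{p_1}$; it tacitly needs $p_1>1$, which you make explicit.
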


\begin{proof}
The class $\overline{\CRT}(0,0)$ is zero and $\sigma$ is zero-preserving, so
$M(0)=0$.  The class $\overline{\CRT}(1,0)$ has residues $(1,0)$ modulo
$(p_1,p_2)$, hence equals $\Delta_1$ modulo $p$, and $\Delta_1\not\equiv0$
modulo $p$ because $[p_2^{-1}]_{p_1}\ne0$.  Let $c=\sigma$ applied to that
class.  The least-nonnegative section gives $c\in[0,p)$ and the centered
section gives $c\in(-p/2,p/2]$, so in both cases $0<|c|<p$, the strict lower
bound because $c\not\equiv0\pmod p$.  Since $p<q/2$ we get $0<|c|<q$, hence
$c\not\equiv0\pmod q$ and $M(1)\ne0=M(0)$.

The restriction to these two sections is needed: an arbitrary section may send
the class to $\Delta_1+jp$ for any integer $j$, and such a representative can
be divisible by $q$.
\end{proof}

\subsection{A residue-confined subgaussian error distribution}

Let $U,V$ be independent uniform bits and let $B_1=U-V$, so that $B_1$ takes
the values $-1,0,1$ with probabilities $1/4,1/2,1/4$.  Sample the coefficients
of the error independently from $p_2B_1$.

\begin{proposition}
\label{prop:law}
The law $p_2B_1$ is discrete, efficiently samplable, supported on $p_2\R$, and
bounded coefficientwise by $p_2$, hence deterministically
$\|e\|_2\le p_2\sqrt n$ for every sample.  It is subgaussian with variance
proxy
$p_2^2/2$: for every real $t$,
\[
 \mathbb E\!\left[e^{tp_2B_1}\right]\le e^{t^2p_2^2/4},
\]
whence $\Pr\left[|p_2B_1|\ge u\right]\le2e^{-u^2/p_2^2}$ for every $u>0$, and
for independent coefficients $x=(x_1,\dots,x_n)$ and every
$y\in\mathbb R^n$,
\[
 \mathbb E\!\left[e^{t\langle x,y\rangle}\right]
 \le e^{t^2p_2^2\|y\|_2^2/4}.
\]
Equivalently, in the normalization used in \cite{PradhanEtAl2026}, which
requires
$\Pr[|\langle\alpha,e\rangle|>t\sigma\|\alpha\|_2]\le2e^{-\pi t^2}$,
the law satisfies that condition with
\[
 \sigma=p_2\sqrt\pi .
\]
The law therefore satisfies the discreteness, efficient-sampling,
boundedness, and subgaussian tail conditions imposed on the error
distribution in the Setup and Section~3.2 of \cite{PradhanEtAl2026}.
\end{proposition}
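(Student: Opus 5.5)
The structural claims follow directly from the construction, so the argument starts with them. A sample is formed from two independent uniform bits per coefficient, so drawing an error costs $2n$ random bits and $O(n)$ arithmetic. Each coefficient lies in $\{-p_2,0,p_2\}$, so the law is discrete and supported on $p_2\R$. Each coefficient has absolute value at most $p_2$, and summing the squares of $n$ coefficients bounded by $p_2$ gives $\|e\|_2\le p_2\sqrt n$ deterministically.

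The core step is the moment generating function bound. Since $U$ and $V$ are independent,
\[
\mathbb E[e^{sB_1}]=\mathbb E[e^{sU}]\,\mathbb E[e^{-sV}]=\tfrac14(1+e^{s})(1+e^{-s})=\cosh^2(s/2).
\]
The standard estimate $\cosh x\le e^{x^2/2}$ follows from comparing Taylor coefficients, using $(2k)!\ge 2^k k!$. Applying it with $x=s/2$ gives $\mathbb E[e^{sB_1}]\le e^{s^2/4}$. Substituting $s=tp_2$ yields the stated bound $e^{t^2p_2^2/4}$. In the convention $\mathbb E e^{tX}\le e^{t^2\sigma^2/2}$ this is variance proxy $p_2^2/2$.

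The remaining statements are consequences of this bound.

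\emph{Scalar tail.} A Chernoff bound gives $\Pr[p_2B_1\ge u]\le \inf_t e^{-tu+t^2p_2^2/4}=e^{-u^2/p_2^2}$, with the optimum at $t=2u/p_2^2$. Applying the same bound to the symmetric lower tail gives the factor $2$.

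\emph{Vector bound.} By independence of the coefficients, $\mathbb E e^{t\langle x,y\rangle}=\prod_i\mathbb E e^{ty_ix_i}\le\prod_i e^{t^2y_i^2p_2^2/4}$, which is the claimed $e^{t^2p_2^2\|y\|_2^2/4}$.

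\emph{Paper's normalization.} Applying the Chernoff argument to $\langle\alpha,e\rangle$ gives
\[
\Pr\bigl[|\langle\alpha,e\rangle|>u\bigr]\le 2e^{-u^2/(p_2^2\|\alpha\|_2^2)}.
\]
Setting $u=t\sigma\|\alpha\|_2$ makes the exponent $-t^2\sigma^2/p_2^2$. This equals $-\pi t^2$ exactly when $\sigma=p_2\sqrt\pi$.

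The final sentence of the proposition is then a matter of matching conditions. The discreteness, efficient-sampling, boundedness and subgaussian-tail requirements in the Setup and Section~3.2 of \cite{PradhanEtAl2026} each correspond to one of the properties just proved.

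The main obstacle I expect is this bookkeeping of normalizations. I have to confirm that the constant $\pi$ in the cited condition and the variance-proxy convention line up, so that $\sigma=p_2\sqrt\pi$ is stated exactly and not only up to a constant. I also have to check that no additional requirement in the cited setup, such as a bound tied to $q$, is violated. The latter is mild, since the bound $p_2<p<q/2$ holds.
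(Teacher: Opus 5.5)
Your proposal is correct and follows essentially the same route as the paper's proof. Both rest on the support and boundedness observations, the factorization $\mathbb E[e^{sB_1}]=\cosh^2(s/2)\le e^{s^2/4}$, a Chernoff step with the factor $2$ from symmetry, the product over independent coordinates, and the match $\sigma=p_2\sqrt\pi$. Your direct application of Chernoff to $\langle\alpha,e\rangle$ via the vector moment bound spells out a step the paper compresses into ``substituting into the two-sided bound.''
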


\begin{proof}
Support, boundedness, discreteness, and sampling cost follow from the support
$\{-p_2,0,p_2\}$ and two fair coin flips per coefficient.  For the tail,
independence of $U$ and $V$ gives
\[
 \mathbb E\!\left[e^{tB_1}\right]
 =\frac{1+e^{t}}{2}\cdot\frac{1+e^{-t}}{2}
 =\frac{1+\cosh t}{2}
 =\cosh^2(t/2)\le e^{t^2/4},
\]
using $\cosh x\le e^{x^2/2}$.  Replacing $t$ by $tp_2$ gives the stated
moment generating function bound, that is, a subgaussian variance proxy
$s^2=p_2^2/2$.  The Chernoff bound
$\Pr[X\ge u]\le\exp(-u^2/2s^2)$ and symmetry of $B_1$ give the two-sided tail
bound.  For a linear form, independence of the coefficients multiplies the
individual bounds, producing the exponent
$t^2p_2^2\sum_jy_j^2/4$.  For the stated normalization, substituting
$u=t\sigma\|\alpha\|_2$ into the two-sided bound gives exponent
$-t^2\sigma^2/p_2^2$, which equals $-\pi t^2$ exactly when
$\sigma=p_2\sqrt\pi$.
\end{proof}

The law $p_2B_1$ is distinct from the unit-step centered-binomial family
$B_\eta$ used elsewhere in \cite{PradhanEtAl2026}.
Proposition~\ref{prop:law} therefore gives an explicit efficiently samplable
error law to which Theorems~\ref{thm:collapse} and~\ref{thm:distinguish}
apply, in a form that is checkable against the stated distributional
conditions.

The construction draws the secret, the public-key error, the encryption
multiplier, and the encryption errors from a single distribution $\chi$.
Taking that distribution to be the coefficientwise law $p_2B_1$ at every
occurrence in $\KeyGen$ and $\Enc$ gives a complete instantiation of the
family to which the two theorems apply.  

\begin{corollary}[The insecure instantiation is correct]
\label{cor:correct}
Let $\sigma$ be the least-nonnegative or the centered section, let $p<q/2$,
and let all encoder errors be $p_2$-confined.  Then every fresh ciphertext
decrypts to its message.
\end{corollary}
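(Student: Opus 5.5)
We reduce everything to the collapsed equations of Theorems~\ref{thm:collapse} and~\ref{thm:distinguish} and then check the decryption formula directly. We use the standard decryption $v=c_0+c_1s$ in $\Rq$, followed by a centered lift and reduction modulo $p_1$ to read the message, as in \cite{PradhanEtAl2026}. Confinement of $e_{\rm pk}$, $e_0$, $e_1$ together with zero preservation gives three identities, exactly as in the proof of Theorem~\ref{thm:distinguish}:
\[
b=as,\qquad c_0=bu+M(m),\qquad c_1=-au.
\]
Substituting these into the decryption value gives
\[
v=asu+M(m)-aus=M(m)\quad\text{in }\Rq.
\]
The first step is therefore purely algebraic and uses only commutativity of $\Rq$. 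All secret- and noise-dependent terms cancel exactly, so no bound on $s$ or $u$ is needed.

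The second step identifies what decryption extracts from $M(m)$. By definition $M(m)=\sigma(\overline{\CRT}(m,0))\bmod q$. Let $c=\sigma(\overline{\CRT}(m,0))\in\R$ be the integer-polynomial representative.

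\begin{itemize}
\item For the least-nonnegative section every coefficient lies in $[0,p)$.
\item For the centered section every coefficient lies in $(-p/2,p/2]$.
\end{itemize}

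In both cases every coefficient has absolute value below $p<q/2$. Hence the centered lift of $v$ modulo $q$ returns $c$ itself coefficientwise, with no wraparound. This is the same size argument as in Corollary~\ref{cor:separation}.

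The third step reduces $c$ modulo $p_1$. Since $c\equiv\overline{\CRT}(m,0)\pmod p$, its residues modulo $(p_1,p_2)$ are $(m,0)$, so $c\bmod p_1=m$. If the scheme's decryption instead reduces modulo $p$ and applies the inverse CRT, the first component is again $m$.

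The main point to get right is the lifting step. One must confirm that the decryption procedure of \cite{PradhanEtAl2026} uses a centered (or least-nonnegative) lift compatible with the bound $|c|<q/2$. For the least-nonnegative section the coefficients of $c$ lie in $[0,p)\subset(-q/2,q/2)$, so the centered lift still recovers them. We would state the decryption map explicitly and note that the decryption error term of the original correctness analysis is identically zero here. This avoids any appeal to that analysis's noise bounds.
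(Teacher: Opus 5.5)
Your proposal is correct and follows the paper's proof essentially step for step. Confinement collapses the three encoder terms, giving $c_0+c_1s=M(m)$ exactly in $\Rq$; the bound $\|M(m)\|_\infty<p<q/2$ for both named sections lets centering modulo $q$ return the integer representative itself; and reduction modulo $p_1$ returns $m$, which your residue check $(m,0)$ modulo $(p_1,p_2)$ spells out.
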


\begin{proof}
Collapse gives $b=as$, $c_0=bu+M(m)=asu+M(m)$ and $c_1=-au$, so the decryption
phase is exactly
\[
 c_0+c_1s=asu+M(m)-aus=M(m),
\]
with no error term at all.  Both named sections give
$\|M(m)\|_\infty<p<q/2$, so centering modulo $q$ returns the integer
polynomial $M(m)$ itself, and reduction modulo $p_1$ returns $m$.
\end{proof}

Thus the confined-law instantiation is simultaneously correct and insecure.

\begin{corollary}[Uniform asymptotic adversary]
\label{cor:asymptotic}
Let a polynomial-time parameter generator output reduced-CRT instances
satisfying the hypotheses of Theorem~\ref{thm:distinguish}, with efficiently
computable distinct message encodings and errors drawn from $p_2B_1$.  Then
the algorithm of Theorem~\ref{thm:distinguish} is a single uniform
deterministic polynomial-time chosen-plaintext adversary that succeeds with
probability one at every security parameter.
\end{corollary}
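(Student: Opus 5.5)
The plan is to reduce Corollary~\ref{cor:asymptotic} to a pointwise application of Theorem~\ref{thm:distinguish} at each security parameter, and then check that the resulting algorithm is uniform. Fix the adversary $\mathcal A$ that takes as input the public parameters $(n,q,p_1,p_2,\sigma)$, the public key $(a,b)$, the two messages and the challenge $(c_0,c_1)$. It computes $M(m_0)$ and $M(m_1)$ by evaluating the efficiently computable encoder. It then forms $\widetilde a$, $d=\gcd(\widetilde a,F)$ and $h=F/d$ over $\F_q$, and recovers the residue of $M(m_\beta)$ modulo $d$ as $c_0 \bmod d$. Modulo $h$ it recovers the residue as $c_0-bu$, where $u\equiv -a^{-1}c_1$. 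It recombines the two residues by polynomial CRT and outputs the index of the matching encoding. Because this description takes the parameters as input and does not depend on the security parameter, it is a single uniform machine.

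The second step checks the hypotheses of Theorem~\ref{thm:distinguish} at each security parameter. Primality of $q$ and squarefreeness of $F$ are assumed of the generator's output, and Remark~\ref{rem:squarefree} shows squarefreeness is automatic for odd prime $q$ and power-of-two $n$. Every sample of $p_2B_1$ has coefficients in $\{-p_2,0,p_2\}$, so $e_{\rm pk},e_0,e_1$ are $p_2$-confined with probability one, by the support statement of Proposition~\ref{prop:law}. Separation $M(m_0)\neq M(m_1)$ is part of the hypothesis, or follows from Corollary~\ref{cor:separation} for the two named sections. Theorem~\ref{thm:distinguish} then gives $\beta'=\beta$ on every run, for every $a$ and every choice of $s,u$. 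Success probability one therefore holds over all randomness of the experiment, and the advantage is exactly $1/2$ at every security parameter.

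The remaining point, which I expect to be the main obstacle, is the polynomial running-time bound. The complexity must be uniformly polynomial in the security parameter, not merely polynomial in $n$ and $\log q$ for each fixed instance. The generator runs in polynomial time, so $n$, $\log q$ and $\log p$ are bounded by a polynomial in $\lambda$. The gcd, extended-Euclid and CRT steps take $\mathrm{poly}(n,\log q)$ field operations, as noted in the proof of Theorem~\ref{thm:distinguish}. Evaluating $\overline{\CRT}$ and $\sigma$ costs $\mathrm{poly}(n,\log p)$ by the efficient-computability assumption. Composing these polynomial bounds gives a single polynomial in $\lambda$. The adversary is deterministic because none of its steps uses randomness.
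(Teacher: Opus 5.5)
Your proposal is correct and takes essentially the same approach as the paper. The paper states this corollary without a separate proof, as the pointwise application of Theorem~\ref{thm:distinguish} at each security parameter with confinement supplied by the support $\{-p_2,0,p_2\}$ of $p_2B_1$ (Proposition~\ref{prop:law}); your explicit treatment of uniformity, determinism and the composed polynomial bound in $\lambda$ just spells out what the paper leaves implicit.
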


\section{Error-distribution mismatch in the RLWE transformation}
\label{sec:mismatch}

Theorem~4 of \cite{PradhanEtAl2026} transforms an ordinary RLWE sample
$(a,as+e)$ with $e\leftarrow\chi$ by scaling with $\Delta_2$, and identifies
the resulting error $\Delta_2e$ with $\CRT_{p_1,p_2}(0,e)$.  For the reduced
encoder these are different maps, and the difference is visible at the level
of distributions.  Both maps act coefficientwise, so it is enough to compare
them on a single coefficient.  Write, for an integer $x$,
\[
 \tau(x)=\Delta_2x\bmod q,
 \qquad
 \gamma(x)=\sigma\!\left(\overline{\CRT}(0,x\bmod p_2)\right)\bmod q,
\]
and let $T$ and $C$ denote the corresponding coefficientwise maps on integer
polynomials, so that $C(e)=E_{\sigma,q}(0,e)$.  The transformation of
Theorem~4 produces the error law $T_\#\chi$, while the reduced encoder
produces $C_\#\chi$.  The same-law conclusion can hold only if these two
push-forwards coincide.

For the least-nonnegative representative of the reduced CRT expression, the
discrepancy admits a closed form.  The push-forward analysis below continues
to allow an arbitrary section.

\begin{lemma}[Exact discrepancy for the least-nonnegative representative]
\label{lem:discrepancy}
Write $k=[p_1^{-1}]_{p_2}$, so that $\Delta_2=p_1k$, and let
$\gamma_0(e)=\Delta_2(e\bmod p_2)\bmod p$ be the integer returned by the
reduced CRT function on $(0,e)$.  Then, for every integer $e$,
\[
 \Delta_2e-\gamma_0(e)=p\left\lfloor\frac{ke}{p_2}\right\rfloor .
\]
Consequently the identity $\Delta_2e\equiv\CRT_{p_1,p_2}(0,e)$ in $\Rq$ holds
if and only if $\lfloor ke/p_2\rfloor\equiv0\pmod q$.  Under the no-wrap
condition $|\Delta_2e|<q/2$ used in the proof of Theorem~4, this is equivalent
to
\[
 0\le ke<p_2 .
\]
\end{lemma}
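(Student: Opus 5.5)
The plan is a direct computation of $\gamma_0(e)$ via division with remainder, followed by a size argument for the ``consequently'' clauses.

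\emph{Closed form.} Write $e=p_2t+r$ with $r=e\bmod p_2\in[0,p_2)$. Since $\Delta_2=p_1k$ and $p=p_1p_2$, reduction of $p_1kr$ modulo $p_1p_2$ is $p_1$ times the reduction of $kr$ modulo $p_2$. Hence
\[
\gamma_0(e)=p_1\,(kr\bmod p_2).
\]
Because $ke\equiv kr\pmod{p_2}$, we have $kr\bmod p_2=ke\bmod p_2$, so
\[
\Delta_2e-\gamma_0(e)=p_1\bigl(ke-(ke\bmod p_2)\bigr)=p_1p_2\left\lfloor\frac{ke}{p_2}\right\rfloor .
\]
This is the claimed identity, valid for every integer $e$, negative ones included, since the floor and least-nonnegative conventions agree. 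This step is routine; the only care needed is to use least-nonnegative residues consistently.

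\emph{Equivalence modulo $q$.} The identity $\Delta_2e\equiv\CRT_{p_1,p_2}(0,e)$ in $\Rq$, read coefficientwise, means $p\lfloor ke/p_2\rfloor\equiv0\pmod q$. Since $\gcd(p,q)=1$, $p$ is a unit modulo $q$, and this is equivalent to $\lfloor ke/p_2\rfloor\equiv0\pmod q$.

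\emph{No-wrap case.} I expect this to be the only point needing an inequality. Assume $|\Delta_2e|<q/2$. We also know $0\le\gamma_0(e)<p<q/2$. Therefore
\[
\bigl|p\lfloor ke/p_2\rfloor\bigr|\le|\Delta_2e|+\gamma_0(e)<q .
\]
An integer of absolute value less than $q$ is divisible by $q$ only if it is $0$. Since $p\neq0$, the congruence therefore forces $\lfloor ke/p_2\rfloor=0$, which is exactly $0\le ke<p_2$. Conversely, $0\le ke<p_2$ gives floor $0$, so the identity holds.

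Two small remarks complete the argument. First, $k\ge1$ whenever $p_2>1$, because $p_1$ is invertible modulo $p_2$. Second, in the degenerate case $p_2=1$ the condition $0\le ke<p_2$ reads $ke=0$, and it is consistent with the formula, since then $k=0$ and $\Delta_2=0$.
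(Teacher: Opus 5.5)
Your proof is correct and follows essentially the paper's route: division with remainder gives the closed form $p\lfloor ke/p_2\rfloor$, invertibility of $p$ modulo $q$ gives the congruence criterion, and a size bound under the no-wrap premise forces the floor to vanish. The differences are cosmetic: you factor $p_1$ out of $\gamma_0(e)$ before taking the floor, and you bound $|\Delta_2e-\gamma_0(e)|<q/2+p<q$ using the standing assumption $p<q/2$, whereas the paper divides by $p$ to get $|ke/p_2|<q/(2p)$ and only needs $p>1$ there.
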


\begin{proof}
Write $e=p_2t+y$ with $y=e\bmod p_2\in\{0,\dots,p_2-1\}$.  Then
$\gamma_0(e)=\Delta_2y-p\lfloor\Delta_2y/p\rfloor$ and
$\Delta_2e=\Delta_2p_2t+\Delta_2y$, so
\[
 \Delta_2e-\gamma_0(e)=\Delta_2p_2t+p\left\lfloor\frac{\Delta_2y}{p}\right\rfloor .
\]
Now $\Delta_2p_2=p_1kp_2=pk$ and $\Delta_2y/p=ky/p_2$, whence the right-hand
side is $p(kt+\lfloor ky/p_2\rfloor)=p\lfloor k(p_2t+y)/p_2\rfloor
=p\lfloor ke/p_2\rfloor$.  Since $\gcd(p,q)=1$, this vanishes in $\Rq$ exactly
when $\lfloor ke/p_2\rfloor\equiv0\pmod q$.  Under the no-wrap condition,
\[
 \left|\frac{ke}{p_2}\right|
 =\frac{|\Delta_2e|}{p}<\frac{q}{2p}.
\]
Since $p>1$, this implies $|\lfloor ke/p_2\rfloor|<q$.  The congruence then
forces $\lfloor ke/p_2\rfloor=0$, which is equivalent to
$0\le ke<p_2$.
\end{proof}

The lemma isolates the mechanism.  The proof of Theorem~4 invokes the
small-error condition $\|\Delta_2e\|<q/2$, but that condition does not force
$\lfloor ke/p_2\rfloor$ to vanish, as the next example shows.

\begin{example}[A single integer refutes the identity]
\label{ex:single}
At the reported $p_1=65537$ and $p_2=3$ we have $k=2$ and
$\Delta_2=131074$.  Among coefficients satisfying the no-wrap premise, the
identity modulo $q$ therefore holds exactly for $0\le2e<3$, that is, for
$e\in\{0,1\}$.  Already $e=2$ fails:
\[
 \Delta_2\cdot2=262148,
 \qquad
 \CRT_{p_1,p_2}(0,2)=65537,
\]
which differ by $p=196611$, while
$\|\Delta_2\cdot2\|_\infty=262148<q/2$ for either reported ciphertext prime.
The small-error premise of Theorem~4 is therefore satisfied and its conclusion
still fails, on a single coefficient.
\end{example}

\begin{proposition}[Coefficient push-forward mismatch]
\label{prop:mismatch}
Let $q$ be prime with $p<q/2$, let $\sigma$ be any section, and let $\nu$ be a
distribution on $\Z$.  If $\nu$ charges more than $p_2$ integers that are
pairwise distinct modulo $q$, then $\tau_\#\nu\ne\gamma_\#\nu$.
\end{proposition}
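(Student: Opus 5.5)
A support-counting argument suffices. The map $\gamma$ factors through $x\bmod p_2$: $\gamma(x)=\sigma(\overline{\CRT}(0,x\bmod p_2))\bmod q$ depends only on the residue of $x$ modulo $p_2$. Hence the image of $\gamma$ has at most $p_2$ elements, and the push-forward $\gamma_\#\nu$ is supported on a set of at most $p_2$ points of $\Z_q$. This holds for an arbitrary section $\sigma$, which explains why the proposition places no restriction on $\sigma$.

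Next we show that $\tau_\#\nu$ charges more than $p_2$ points. Let $x_1,\dots,x_N$ with $N>p_2$ be integers charged by $\nu$ that are pairwise distinct modulo $q$. Since $q$ is prime, $\Delta_2$ is invertible modulo $q$ provided $q\nmid\Delta_2$. To check this, note that $\Delta_2=p_1[p_1^{-1}]_{p_2}$ is not divisible by $q$: the prime $q$ does not divide $p_1$ because $\gcd(p,q)=1$, and it does not divide $[p_1^{-1}]_{p_2}$ because $0<[p_1^{-1}]_{p_2}<p_2\le p<q$. In the degenerate case $p_2=1$ we have $[p_1^{-1}]_{p_2}=0$, so $\Delta_2=0$; then $\gamma$ is also constant and the claim needs $p_2\ge2$. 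I would note this case, and the argument goes through whenever $\Delta_2\not\equiv0\pmod q$. Under that condition the map $x\mapsto\Delta_2x\bmod q$ is injective on residues modulo $q$, so $\tau(x_1),\dots,\tau(x_N)$ are $N$ distinct points. Each of them carries mass at least $\nu(\{x_i\})>0$ under $\tau_\#\nu$, so $\tau_\#\nu$ charges more than $p_2$ points of $\Z_q$.

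The two laws therefore have supports of different sizes, so they cannot be equal: $\tau_\#\nu$ has more than $p_2$ atoms, while $\gamma_\#\nu$ has at most $p_2$. The main point needing care is the invertibility of $\Delta_2$ modulo $q$, that is, ruling out $q\mid\Delta_2$. I would handle this as above using the ranges of the inverses and $p<q/2$, and treat or exclude the trivial case $p_2=1$ explicitly.
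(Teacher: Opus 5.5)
Your proposal is correct and follows the paper's proof essentially step for step. In both, $\gamma$ factors through $x\bmod p_2$, so $\gamma_\#\nu$ has at most $p_2$ atoms. Also in both, $\Delta_2=p_1k$ with $0<k<p_2\le p<q$ and $q\nmid p_1$, so $\Delta_2$ is a unit modulo $q$, $\tau$ is injective on residues, and $\tau_\#\nu$ has more than $p_2$ atoms.

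Your remark on $p_2=1$ is a fair edge case that the paper's bound $1\le k$ tacitly excludes. There $\Delta_2=0$, and the claim genuinely fails for zero-preserving sections.
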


\begin{proof}
The map $\gamma$ factors through $x\bmod p_2$, and there are exactly $p_2$
residue classes modulo $p_2$, so $\gamma$ takes at most $p_2$ values and
$\gamma_\#\nu$ is supported on at most $p_2$ points.

For $\tau$, write $\Delta_2=p_1k$ with $k=[p_1^{-1}]_{p_2}$.  Then $1\le
k<p_2\le p<q$ and $\gcd(p_1,q)=1$, so both factors are units modulo the prime
$q$ and $\Delta_2$ is a unit.  Hence $\tau(x)=\tau(x')$ forces $x\equiv
x'\pmod q$, so $\tau$ induces a bijection from the residues modulo $q$ charged
by $\nu$ onto the support of $\tau_\#\nu$.  By hypothesis that support has
more than $p_2$ points.  Two distributions with different support
cardinalities are distinct.
\end{proof}

\begin{corollary}[Polynomial push-forward mismatch]
\label{cor:mismatch}
Assume the hypotheses of Proposition~\ref{prop:mismatch} and let $\chi$ sample
the $n$ coefficients of $e$ independently from $\nu$.  Then $T_\#\chi\ne
C_\#\chi$.
\end{corollary}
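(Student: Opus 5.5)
Since the coefficients are independent, both push-forwards are product measures: $T_\#\chi=(\tau_\#\nu)^{\otimes n}$ and $C_\#\chi=(\gamma_\#\nu)^{\otimes n}$, viewed as laws on $\Rq\cong\Z_q^n$ through the power-basis coordinates. The plan is to reduce the statement to Proposition~\ref{prop:mismatch} by looking at a single coordinate. Two laws on $\Z_q^n$ that are equal have equal images under every measurable map, and in particular equal marginals under the first coordinate projection $\pi_1:\Rq\to\Z_q$.

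The steps, in order, are as follows.

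\emph{Step 1.} Identify the marginals. The maps $T$ and $C$ act coefficientwise by $\tau$ and $\gamma$. The coefficients of $e$ are independent with law $\nu$. Hence $(\pi_1)_\#T_\#\chi=\tau_\#\nu$ and $(\pi_1)_\#C_\#\chi=\gamma_\#\nu$.

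\emph{Step 2.} Invoke Proposition~\ref{prop:mismatch}, whose hypotheses are assumed. It gives $\tau_\#\nu\neq\gamma_\#\nu$.

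\emph{Step 3.} Conclude. If $T_\#\chi$ and $C_\#\chi$ were equal, pushing both forward by $\pi_1$ would give $\tau_\#\nu=\gamma_\#\nu$, contradicting Step~2. Hence $T_\#\chi\ne C_\#\chi$.

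An alternative is a direct support count. The support of $C_\#\chi$ has at most $p_2^n$ points, because $C$ factors through $e\bmod p_2$. The support of $T_\#\chi$ is a product of $n$ sets, each of size greater than $p_2$, because $\tau$ is injective modulo $q$ by the unit argument in the proof of Proposition~\ref{prop:mismatch}. The marginal argument is shorter, though, so I will use it.

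The only point needing care, and the step I expect to be the main obstacle, is the precise form of Step~1. The coordinate map must be well defined on $\Rq$: the power basis gives $\Rq\cong\Z_q^n$ canonically, so $\pi_1$ is a legitimate map. The coefficientwise description of $T$ and $C$ in Section~\ref{sec:mismatch} must also be compatible with reduction modulo $q$, and it is, because $\tau$ and $\gamma$ already take values in $\Z_q$. No use of negacyclic multiplication is needed, since neither map involves ring multiplication.
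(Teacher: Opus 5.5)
Your proof is correct and follows the paper's argument. Both write the two push-forwards as the $n$-fold products of $\tau_\#\nu$ and $\gamma_\#\nu$ and reduce to Proposition~\ref{prop:mismatch} on a single coordinate. Your projection step also states the needed direction (equal laws have equal marginals) more precisely than the paper's phrase ``a product measure is determined by its marginals.''
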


\begin{proof}
Both $T$ and $C$ act coefficientwise, so $T_\#\chi=(\tau_\#\nu)^{\otimes n}$
and $C_\#\chi=(\gamma_\#\nu)^{\otimes n}$.  A product measure is determined by
its marginals, and the marginals differ by
Proposition~\ref{prop:mismatch}.
\end{proof}

The hypothesis is mild.  A centered binomial $B_\eta$ charges $2\eta+1$
consecutive integers, pairwise distinct modulo $q$ when $2\eta+1\le q$, so the
result applies whenever $p_2<2\eta+1\le q$.  An untruncated discrete Gaussian
assigns positive mass to every integer and therefore charges more than $p_2$
residues, since $q>p_2$.  Collapse of the kind
studied in Section~\ref{sec:attack} is the opposite extreme, in which
$\gamma_\#\nu$ degenerates to a point mass.

\begin{example}[Separation at the parameters of \cite{PradhanEtAl2026}]
\label{ex:tv}
Take $p_1=65537$ and $p_2=3$ as reported, and let $q=35175245135873$, one of
the two reported ciphertext primes, so that $\Delta_2=131074$ and
$p=196611<q/2$.  Let $\nu$ be the centered binomial $B_2$, taking
$-2,-1,0,1,2$ with probabilities $1/16,4/16,6/16,4/16,1/16$.  Since $\nu$
charges five integers that are pairwise distinct modulo $q$ and $5>3=p_2$,
Proposition~\ref{prop:mismatch} applies.  Concretely, $\gamma$ collapses
$\{-2,1\}$ onto one point and $\{-1,2\}$ onto another, so under the
least-nonnegative section
\[
 \gamma_\#\nu:\quad 0\mapsto\tfrac{6}{16},\quad
 131074\mapsto\tfrac{5}{16},\quad 65537\mapsto\tfrac{5}{16},
\]
whereas $\tau_\#\nu$ retains all five atoms with the binomial weights.  The
total variation distance between these coefficient laws is $3/8$.  Under the
centered section it is $5/8$.

The example satisfies the small-error premise used in the proof of Theorem~4,
namely $|\Delta_2e|<q/2$, and does so for every sample.  At the reported $n=8192$, every coefficient of $e$ is bounded by
$2$, so
\[
 \|\Delta_2e\|_\infty\le2\Delta_2=262148,
 \qquad
 \|\Delta_2e\|_2\le2\Delta_2\sqrt n<2.38\times10^7,
\]
against $q/2=17587622567936.5$.  The counterexample therefore lies inside the
regime the transformation assumes.
\end{example}

\begin{corollary}[Near-disjointness at the reported dimension]
\label{cor:disjoint}
Under the hypotheses of Example~\ref{ex:tv}, with $\chi=\nu^{\otimes n}$,
\[
 \TV(T_\#\chi,C_\#\chi)\ \ge\ 1-\left(\tfrac58\right)^n
\]
for the least-nonnegative section, and $1-(3/8)^n$ for the centered section.
At $n=8192$ the first bound exceeds $1-10^{-1672}$ and the second exceeds
$1-10^{-3489}$.
\end{corollary}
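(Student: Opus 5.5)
The plan is to exhibit a single event that has probability one under $C_\#\chi$ and exponentially small probability under $T_\#\chi$. This avoids any general product bound on total variation, since $\TV$ does not tensorize multiplicatively in the direction we need. Let $S=\gamma(\Z)\subset\Z_q$ be the image of the coefficient map $\gamma$; it has at most $p_2=3$ points. Set $A=S^n\subset\Rq$, the event that every coefficient lies in $S$. Then $C_\#\chi(A)=1$ by construction. By independence of the coefficients, $T_\#\chi(A)=\bigl(\tau_\#\nu(S)\bigr)^n$. Since $\TV(P,Q)\ge P(A)-Q(A)$ for every event $A$, this gives
\[
 \TV(T_\#\chi,C_\#\chi)\ \ge\ 1-\bigl(\tau_\#\nu(S)\bigr)^n .
\]
So everything reduces to computing the single-coefficient mass $\tau_\#\nu(S)$.

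For the least-nonnegative section, Example~\ref{ex:tv} gives $S=\{0,131074,65537\}$. The five atoms of $\tau_\#\nu$ are $\Delta_2x\bmod q$ for $x\in\{-2,\dots,2\}$, namely $0$, $\pm131074$ and $\pm262148$ reduced modulo $q$. Because $262148<q/2$, the negative values reduce to residues near $q$, and these are distinct from the points of $S$. Hence only $x=0$ (mass $6/16$) and $x=1$ (mass $4/16$) land in $S$; in particular $65537$ is not hit, since $\Delta_2x\ne65537$ for $|x|\le2$ and no wraparound occurs. This gives $\tau_\#\nu(S)=10/16=5/8$.

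For the centered section, the class of $\Delta_2$ modulo $p$ has centered representative $131074-196611=-65537$, and the class of $2\Delta_2$ has centered representative $65537$. So $S=\{0,\pm65537\}$ modulo $q$. Among the atoms of $\tau_\#\nu$, only $x=0$ lands in $S$, which gives $\tau_\#\nu(S)=6/16=3/8$. The main care point is exactly this check that no atom $\pm131074$ or $\pm262148$ coincides modulo $q$ with a point of $S$. It follows because all the integers involved have absolute value far below $q/2$, so congruence modulo $q$ forces equality of integers.

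Finally, the numerics follow from logarithms. We have $8192\log_{10}(8/5)\approx8192\cdot0.20412>1672$, so $(5/8)^{8192}<10^{-1672}$. Likewise $8192\log_{10}(8/3)\approx8192\cdot0.42597>3489$, so $(3/8)^{8192}<10^{-3489}$. These two inequalities give the stated bounds.
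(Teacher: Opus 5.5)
Your proof is correct and is essentially the paper's argument. The paper also takes $S$ to be the at-most-three-point support of $\gamma_\#\nu$, uses the event $S^n$ (phrased via its complement, which is equivalent), obtains the single-coefficient masses $5/8$ and $3/8$, and tensorizes by independence. The only real difference is the numerics: the paper certifies them by the exact integer inequalities $10^{1672}5^{8192}<8^{8192}$ and $10^{3489}3^{8192}<8^{8192}$. Your decimal-logarithm check is also valid, but in the first case the margin in the exponent is only about $0.15$, so $\log_{10}(8/5)$ must be carried to at least five digits (four digits, $0.2041$, would not suffice).
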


\begin{proof}
Let $S$ be the support of $\gamma_\#\nu$, so that $C_\#\chi$ is supported on
$S^n$.  From the table above, $\tau(x)\in S$ only for $x\in\{0,1\}$ under the
least-nonnegative section, of total mass $6/16+4/16=5/8$, which by
Lemma~\ref{lem:discrepancy} is exactly the probability that a coefficient
satisfies the identity of Theorem~4, and only for $x=0$ under the centered
section, of mass $6/16=3/8$.  By independence,
$T_\#\chi(S^n)$ is $(5/8)^n$ and $(3/8)^n$ respectively.  Taking $A$ to be the
complement of $S^n$ gives $C_\#\chi(A)=0$ and
$\TV\ge T_\#\chi(A)-C_\#\chi(A)$.  The numerical bounds follow from the exact
integer inequalities
\[
 10^{1672}5^{8192}<8^{8192},
 \qquad
 10^{3489}3^{8192}<8^{8192}.
\]
\end{proof}

At the reported dimension their total variation distance is therefore
exponentially close to one.

\begin{remark}[A smaller illustration]
\label{rem:maxsep}
The same phenomenon is visible at toy parameters.  For $p_1=3$, $p_2=2$,
$q=17$ and $\nu$ uniform on $\{-1,1\}$, both sections send the CRT class with
residues $(0,1)$ to the integer $3$, so $\gamma_\#\nu$ is the point mass at
$3$ while $\tau_\#\nu$ is uniform on $\{14,3\}\subset\Z_{17}$, at total
variation distance $1/2$.  This law has only $p_2$ points modulo $q$, so it
does not satisfy the hypothesis of Proposition~\ref{prop:mismatch}: the
condition is sufficient, not necessary.
\end{remark}

The law that the reduced encoder actually induces is the push-forward
\[
 \psi=(e\mapsto E_{\sigma,q}(0,e))_\#\chi .
\]
This identity determines $\psi$, but it does not by itself offer a reduction
from ordinary RLWE.  An ordinary RLWE sample exposes $as+e$, not $e$, so
the encoder cannot be applied to the sample.  Establishing hardness for $\psi$
requires an argument that does not presuppose access to the error term.

If the unreduced lift of Definition~\ref{def:raw} is used instead, then
$E_{\rm raw,q}(0,e)=\Delta_2e$.  Under the prime-modulus condition $\Delta_2$
is a unit modulo $q$, so scaling both components of an ordinary RLWE sample by
$\Delta_2$ is algebraically consistent with that map and preserves uniformity
of the first component.  The two encoders thus behave differently with respect
to the transformation, which is why the choice between them belongs to the
definition of the construction.

Scaling by $p_2$ gives an exact equivalence for ordinary RLWE.

\begin{theorem}[Scaling equivalence and error collapse under the reduced encoder]
\label{thm:scaling}
Let $\chi_0$ be any distribution on integer polynomials and let
$\chi=(x\mapsto p_2x)_\#\chi_0$.  Assume $p_2$ is a unit modulo $q$.
\begin{enumerate}
\item With the secret drawn uniformly from $\Rq$, ordinary RLWE with error law
$\chi$ is exactly and efficiently equivalent to ordinary RLWE with error law
$\chi_0$.
\item The same holds when a small secret is scaled alongside the error, that
is when $s=p_2s_0$ and $e=p_2e_0$ with $s_0,e_0\leftarrow\chi_0$.
\item Every sample of $\chi$ lies in $p_2\R$, so under any zero-preserving
section the reduced CRT encoder maps it to zero.  The confinement hypotheses of
Theorems~\ref{thm:collapse} and~\ref{thm:distinguish} are therefore satisfied,
and their conclusions hold whenever the remaining hypotheses of each are met.
\end{enumerate}
\end{theorem}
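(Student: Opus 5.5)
The plan is to build explicit, efficiently computable sample maps in both directions, using only that $p_2$ is invertible in $\Z_q$. Write $\pi=[p_2^{-1}]_q$.

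For item~1, the forward map sends an ordinary sample $(a,b)=(a,as+e_0)$ with $e_0\leftarrow\chi_0$ and $s$ uniform in $\Rq$ to $(a,p_2b)$. Since
\[
p_2b=a(p_2s)+p_2e_0,
\]
the new secret $s'=p_2s$ is again uniform because multiplication by $p_2$ is a bijection of $\Rq$, and the new error $p_2e_0$ has law $\chi$ after reduction modulo $q$. The inverse map sends $(a,b')$ to $(a,\pi b')$, with secret $\pi s'$ and error $\pi p_2e_0\equiv e_0$. Several points then follow.
\begin{itemize}
\item The map is applied sample by sample with the same secret transformation, so an oracle for one problem is turned into an oracle for the other with the joint distribution preserved exactly.
\item The uniform distribution on $\Rq\times\Rq$ is preserved, because $(a,b)\mapsto(a,p_2b)$ is a bijection. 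Hence the decision versions correspond, with identical advantages.
\item The search versions correspond via $s=\pi s'$.
\item Both maps cost one scalar multiplication per coefficient.
\end{itemize}

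For item~2 the same map works, with the secret recorded explicitly. Starting from $s_0,e_0\leftarrow\chi_0$ and $b=as_0+e_0$, we get
\[
p_2b=a s_0\cdot p_2+p_2e_0 .
\]
This is where care is needed. Multiplying $b$ by $p_2$ scales the error but keeps the secret $s_0$ rather than producing $p_2s_0$. To scale both, I would instead map $(a,b)\mapsto(a,p_2^2b)$:
\[
p_2^2b=a(p_2s_0)+p_2(p_2e_0).
\]
This still does not give error $p_2e_0$. The reading I would adopt for item~2 is therefore the joint law of the tuple $(s,e)=(p_2s_0,p_2e_0)$ with $b=as+e=p_2(as_0+e_0)$. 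Then $(a,p_2b)$ from an ordinary sample $(a,as_0+e_0)$ is exactly such a sample, since
\[
p_2b=p_2(as_0+e_0)=a(p_2s_0)+p_2e_0=as+e .
\]
The inverse is multiplication by $\pi$. Uniform pairs are again preserved, so the equivalence is exact. The main obstacle is exactly this bookkeeping: recognizing that with $s=p_2s_0$ and $e=p_2e_0$, the scaled sample $p_2(as_0+e_0)$ has precisely this form. The check is to compare $as+e$ with $p_2(as_0+e_0)$ termwise, and that suffices.

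For item~3, every sample of $\chi$ is $p_2x$ with $x$ an integer polynomial, so it lies in $p_2\R$ and the law is $p_2$-confined by definition. Then $e\bmod p_2=0$, so $\overline{\CRT}(0,0)=0$, and zero preservation gives $E_{\sigma,q}(0,e)=\sigma(0)=0$. These are exactly the confinement hypotheses of Theorems~\ref{thm:collapse} and~\ref{thm:distinguish}. Their conclusions then follow by direct invocation once the remaining hypotheses hold: the unit event for Theorem~\ref{thm:collapse}, and prime $q$, squarefree $F$ and $M(m_0)\ne M(m_1)$ for Theorem~\ref{thm:distinguish}.
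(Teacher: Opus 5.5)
Your proof is correct. Items 2 and 3 coincide with the paper's; for item 2, your map $(a,b)\mapsto(a,p_2b)$ is the inverse of the paper's $(a,b)\mapsto(a,p_2^{-1}b)$.

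For item 1 you use a different map. You fix the multiplier and scale the secret, sending $s$ to $p_2s$ and using that multiplication by the unit $p_2$ preserves the uniform law on $\Rq$. The paper instead maps a $\chi$-sample by $(a,b)\mapsto(p_2^{-1}a,p_2^{-1}b)$, which keeps the secret fixed and re-randomizes $a$.

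The paper's map never uses uniformity of $s$. It gives an equivalence for any secret law shared by the two problems, and the search versions correspond through the identity on secrets. Your map buys economy instead. The single map $(a,b)\mapsto(a,p_2b)$ shows that RLWE with error law $\chi_0$ and secret law $D$ is equivalent to RLWE with error law $\chi$ and secret law $(x\mapsto p_2x)_\#D$, for every $D$. Items 1 and 2 are then the special cases $D$ uniform and $D=\chi_0$.

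When you write up item 2, delete the detour. The sentence saying that multiplying $b$ by $p_2$ keeps the secret $s_0$ is wrong: your own display $p_2b=as_0\cdot p_2+p_2e_0$ already exhibits secret $p_2s_0$ and error $p_2e_0$. The $p_2^2b$ variant plays no role.
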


\begin{proof}
For the first part, let $b=as+p_2e_0$ with $s$ uniform.  The public map
$(a,b)\mapsto(p_2^{-1}a,p_2^{-1}b)$ gives second component
$(p_2^{-1}a)s+e_0$, leaves the first component uniform because $p_2$ is a
unit, and is inverted by multiplication of both components by $p_2$.  For the
second part, $b=as+e=p_2(as_0+e_0)$, so $(a,b)\mapsto(a,p_2^{-1}b)$ returns
$(a,as_0+e_0)$ exactly and leaves the first component untouched.  Both maps
are bijections on samples, carry the uniform comparison ensemble to itself,
and therefore preserve advantage in both directions.  The third part is the
computation of Theorem~\ref{thm:collapse} applied to a law supported on
$p_2\R$.
\end{proof}

The invertible scaling by $p_2$ preserves the ordinary-RLWE problem,
whereas the reduced CRT encoder erases the resulting errors.

\subsection{Relation to the conditional security theorem}
\label{sec:conditional}
Theorem~5 of \cite{PradhanEtAl2026} is conditional on decisional
CRT-RLWE hardness for the selected error law.  For $p_2B_1$ under the
reduced encoder, the CRT-RLWE error is identically zero, so that premise
is false.  Theorem~\ref{thm:distinguish} therefore exhibits an insecure
instantiation satisfying the stated distributional conditions rather
than a contradiction of the conditional implication itself.

Theorem~4 is intended to transfer ordinary-RLWE hardness to the
CRT-RLWE distribution, but Proposition~\ref{prop:mismatch} shows that
its transformation does not preserve the error law.  Thus the hardness
premise of Theorem~5 is neither automatic for the permitted
distributions nor established by Theorem~4.

\section{Implications for the security formulation}

The reduced CRT encoder and the unreduced integer lift induce
different elements of $\Rq$, by Lemma~2.3, and the difference
$pK_\sigma(m,e)$ need not vanish because $p$ is a unit modulo $q$.  A correctness or security analysis
carried out for one of the two maps therefore does not transfer to the other,
and the embedding of the CRT output into $\Rq$ is part of the mathematical
definition of the construction.

For the reduced encoder, the error distribution relevant to security in $\Rq$
is the push-forward $\psi$, not the sampled law $\chi$.  Security depends on
$\psi$, and $\psi$ can degenerate even when $\chi$ is nonconstant, bounded,
efficiently samplable, and subgaussian: the confined laws of
Section~\ref{sec:attack} give $\psi$ equal to the point mass at zero.  Excluding support contained in $p_2\R$ removes this exact collapse;
a security statement for the resulting push-forward requires an
appropriate hardness assumption.

\section{Conclusion}

Under every zero-preserving realization in $\Rq$, the reduced CRT function
erases errors supported on $p_2\R$: $e\in p_2\R$ gives $E_{\sigma,q}(0,e)=0$.
The public key then satisfies $b=as$, which yields exact secret recovery
$s=a^{-1}b$ on the unit event and, when the challenge errors are confined and
the two message encodings differ, recovery of the encoded challenge message
for every public multiplier by polynomial gcd and CRT decomposition, with
chosen-plaintext advantage $1/2$.  The coefficient law $p_2B_1$ realizes these
hypotheses while meeting the discreteness, efficient-sampling, boundedness,
and subgaussian conditions imposed on the error distribution.

For the least-nonnegative representative, the discrepancy between the reduced CRT encoder and the scaled ordinary-RLWE error is exactly \(p\lfloor ke/p_2\rfloor\), a multiple of \(p\) that need not vanish modulo \(q\). Under the no-wrap premise invoked in the proof of Theorem~4, equality modulo \(q\) holds exactly when \(0\le ke<p_2\); at the parameters of \cite{PradhanEtAl2026}, the coefficient \(e=2\) already refutes the identity while satisfying that premise. More generally, the two maps induce different error laws whenever the coefficient distribution charges more than \(p_2\) integers that are pairwise distinct modulo \(q\), including a centered binomial \(B_\eta\) whenever \(p_2<2\eta+1\le q\), and an untruncated discrete Gaussian. At the reported parameters, \(B_2\) gives coefficient-level total variation distance \(3/8\) for the least-nonnegative section and \(5/8\) for the centered section, while at the reported dimension the corresponding polynomial distributions have total variation distance exponentially close to one. Theorem~4 therefore does not establish the same-law reduction; the reduced encoder instead induces the push-forward law \(\psi\). The key-recovery result is specific to residue-confined errors, whereas the reduction mismatch applies more generally.

\backmatter

\section*{Acknowledgements}
AI tools were used in manuscript editing, development and
stress-testing of the analysis, exploratory proof checking,
verification of algebraic and numerical calculations, and drafting auxiliary
code. AI suggestions were treated as exploratory. The authors independently verified all mathematical claims, calculations, and conclusions and take full
responsibility for the article.

\begingroup
\small
\bibliography{references}
\endgroup

\end{document}